


\documentclass[final,3p,times]{elsarticle}


\usepackage{amssymb}
 \usepackage{amsthm}





\newtheorem{lemma}{Lemma}
\newtheorem{theorem}{Theorem}
\newtheorem{corollary}{Corollary}
\newtheorem{proposition}{Proposition}
\newtheorem{definition}{Definition}
\newtheorem{remark}{Remark}

\journal{Applied Mathematics Letters}

\begin{document}

\begin{frontmatter}



\title{On the Composition of Secret Sharing Schemes Related to Codes}


\author[IMUVa,AGT]{I. M\'arquez-Corbella} 
\author[IMUVa,MA]{E. Mart\'inez-Moro}
\author[IMUVa,AGT]{E. Su\'arez-Canedo}

\address[IMUVa]{Institute of Mathematics University of Valladolid, Castilla, Spain}
\address[AGT]{Algebra, Geometry and Topology Department}
\address[MA]{Applied Mathematics Department}

\begin{abstract}
In this paper we construct a subclass of the composite access structure introduced in \cite{martinez:2004} based on schemes realizing the structure given by the set of codewords of minimal support of linear codes. This class enlarges the iterated threshold class studied in the same paper. Furthermore all the schemes on this paper are ideal (in fact they allow a vector space construction) and we arrived to give a partial answer to a conjecture stated in \cite{martinez:2004}. Finally, as a corollary we proof that all the monotone access structures based on all the minimal supports of a code can be realized by a vector space construction.
\end{abstract}

\begin{keyword}
Cryptography \sep Secret Sharing Schemes \sep Threshold Schemes.
\MSC 94A60

\end{keyword}

\end{frontmatter}


\section{Introduction}
\label{intro}
We will use the following notation. Let $\mathcal P=\{P_i\}_{i=1}^n$ be a set of participants, $\mathcal K$ be the set of all possible keys and $\mathcal S$ be the share sets. \emph{Secret sharing schemes} are used to distribute a secret $K \in \mathcal K$, like a private key of a cryptosystem, among a group of individuals $\mathcal P$, giving to each participant a share from $\mathcal S$, such that only specified subsets of $\mathcal P$ are able to determine the secret $K$ from joining the shares they hold. 
Let $\Gamma \subseteq 2^{\mathcal P}$ be the family of subsets of $\mathcal P$ which are able to reconstructed the secret (i.e. \emph{authorized or qualified subsets}) then $\Gamma$ is called the \emph{access structure} of the scheme. Since $\Gamma$ is presupposed to satisfy the \emph{monotone property} (that is, if $A\subseteq B \subseteq \mathcal P$ and $A\in \Gamma$, then $B \in \Gamma$) then the set of minimal authorized subset of $\Gamma$, denoted by $\Gamma^m$, determines a basis of $\Gamma$. The \emph{dual} of the access structure $\Gamma$  on the set $\mathcal P$ is defined as the access structure form by the subsets whose complements are not authorized, i.e. 
$$\Gamma^{*}= \left\{ A\subseteq \mathcal P \mid \mathcal P \setminus A \notin \Gamma \right\}.$$
A \emph{perfect} sharing scheme avoid unauthorized coalitions to learn any information about the secret.  Ito, Saito and Nishizeki \cite{ito:1989} showed that for any arbitrary monotone collection of authorized set $\Gamma$, there exists a perfect sharing scheme that realizes $\Gamma$. Moreover, a secret sharing scheme is \emph{ideal} if it is perfect and the domain of shares of each user is $\mathcal S$. An access structure $\Gamma$ is called \emph{ideal} if there is an ideal scheme realizing it.
An interesting class of access structure are those admitting a \emph{vector space construction}, this structure is due to Brickell \cite{brickell:1989}. Let $\mathbb F_q$ be a finite field with $q$ elements, an access structure $\Gamma$ on $\mathcal P$ has a \emph{vector space construction} over $\mathbb F_q$ if there exists a map 
$\begin{array}{cccc}\Phi: & \mathcal P & \longrightarrow & \mathbb F_q^d \end{array}$
and a vector $\mathbf v\in \mathbb F_q^d\setminus \{0\}$ such that the vector $\mathbf v$ can be expressed as a linear combination of vectors in the set $\{ \Phi(\mathcal P_i) \mid \mathcal P_i \in A\}$ if and only if $A\in \Gamma$. Schemes realizing this structures are called \emph{vector space secret sharing schemes}. In sake of simplicity and without lost of generality usually $\mathbf v$ is taken to be  the vector $\mathbf e_1=(1,\mathbf 0)$. Unfortunately finding a rule for deciding when an access structure $\Gamma$ admits a vector space construction is still an open problem if the underlying field is not fixed.
The first examples of secret sharing schemes that appeared on the literature were examples of \emph{threshold schemes}. The access structure of an $(t,n)$-threshold scheme is formed by subsets of participants whose cardinality is at least $t$. These schemes were introduced independently by Shamir \cite{shamir:1979} and Blakley \cite{blakley:1979} in $1979$. Shamir's scheme used polynomial interpolation while Blakley's method is based on intersection properties of finite geometries, indeed both ideas where behind or related to the use of Reed-Solomon codes. Threshold schemes are ideal, admit a vector space construction and give the same opportunity to all the participants to access the secret. Indeed taking $n$ different non-zero elements $\alpha_1, \ldots, \alpha_n\in \mathbb F_q$ and $\Phi$ defined by $\Phi(P_i) = ( 1, \alpha_i, \alpha_i^2, \ldots, \alpha_i^{d-1})\in \mathbb F_q^d$ for all $i\in \{1, \ldots, n\}$ then the $(t,n)$-threshold scheme can be seen as a vector space secret sharing schemes. From now on, the expression $(t,n)$ will denote a $(t,n)$-threshold scheme.
In real life, not all participants are in the same hierarchy and they do not have the same privileges to access certain secrets. This idea has been adapted to secret sharing Schemes by various authors. For instance, \emph{multilevel schemes} by Simmons \cite{simmons:1992}, \emph{bipartite structures} by Padr\'o and S\'aez \cite{padro:1998} or \emph{compartmented schemes} by Brickell \cite{brickell:1989}. In this article we will used a special construction of this type of schemes presented in \cite{martinez:2004} called \emph{composition of access structures}. Let $\mathcal P = \mathcal P_1 \cup \ldots \cup \mathcal P_s$ be a partition of $\mathcal P$ into disjoints sets where $\mathcal P_j$ is given by the set $\{  P_1^{(j)}, \ldots,  P_{n_j}^{(j)}\}$ and $n = n_1 + \ldots + n_r$.  Let $\Gamma_0$ be an access structure on $\mathcal P$ and $\Gamma_i$ be an access structure on $\mathcal P_i$ for $i \in \{1, \ldots, r\}$, then the \emph{composite access structure} of $\Gamma_1, \ldots, \Gamma_r$ following $\Gamma_0$, denoted by $\Gamma_0 [\Gamma_1, \ldots, \Gamma_r]$ is defined as follows:
\begin{equation}\label{eq:iterated}
\Gamma_0 [\Gamma_1, \Gamma_2,\ldots , \Gamma_r] = \bigcup_{B\in \Gamma_0} \left\{ A \subseteq \mathcal P \mid A \cap \mathcal P_i \in \Gamma_i \hbox{ for all } \mathcal P_i \in B\right\}. 
\end{equation}
Let us briefly fix the notation and introduce some basic definitions from coding theory. A \emph{linear code} $\mathcal C$ of length $n$ and dimension $k$ over $\mathbb F_q$, or an $[n,k]$ code for short, is a $k$-dimensional subspace of $\mathbb F_q^n$. For every codeword $\mathbf c \in \mathcal C$ its suport is define as its support as a vector in $\mathbb F_q$, i.e. $\mathrm{supp}(\mathbf c) = \left\{ i \mid c_i \neq 0\right\}$. A codeword $\mathbf c$ is a \emph{minimal support codeword} of $\mathcal C$ if it is non-zero and $\mathrm{supp}(\mathbf c)$ is not contained in the support of any other codeword. We will denote by $\mathcal C^m$ the set of codewords of minimal support of $\mathcal C$. Note that describing the set of codewords with minimum hamming weight in an arbitrary linear code is an NP-problem \cite{berlekamp:1978} even if preprocessing is allowed \cite{bruck:1990}.  Some improvements on their computation have been recently made in \cite{Marquez:2011}.
There are several ways to obtain a secret sharing using a linear code $\mathcal C$, we refer the reader to \cite{chunming:2011,massey:1993, renvall:1996}.
It is not difficult to show that a vector space construction is equivalent to a code in the following sense: consider the matrix whose first column is the vector assigned to the dealer and the rest of  columns are the vector assigned to the participants, this matrix can be seen as a parity check matrix of a code $\mathcal C$ and the authorized subsets are those codeword supports containing a non-zero element on the first position.

In this paper we give a slightly different definition to the previous one. We define the access structure related to the $[n,k]$ code $\mathcal C$ over $\mathcal P$ with $|\mathcal P|=n$, and we denote it by  $\Gamma_{\mathcal C}$, as the set
$\Gamma_{\mathcal C} = \left\{ A \subseteq \mathcal P \mid \exists \mathbf c \in \mathcal C\setminus \{0\} ~:~A = \bigcup_{i \in \mathrm{supp}(\mathbf c)} P_i\right\}$. With this definition we study the composite access structures of the form $\Gamma_0 [\Gamma_{\mathcal C_1}, \Gamma_{\mathcal C_2},\ldots , \Gamma_{\mathcal C_r}]$. We enlarge the well known class of iterated threshold structures in \cite{martinez:2004}. The main result is that this structure admits a vector space construction when $\Gamma_0$ admits a vector space construction. This class of structures gives a partial answer to the conjecture in \cite[Open Problem 2]{martinez:2004} and they are more ``natural'' that the one proposed in it since the dealer appears only in one of the components and therefore there is no need of projecting the shares.   As a corollary we obtain that  $\Gamma_{\mathcal C}$ also  admits a vector space construction.

\section{Composition of structures related to linear codes}
\label{sectit}
Let $\{\mathcal C_i\}_{i=1}^r$ be  a set of linear $\mathbb F_q$ codes each one of length $n_i$ and dimension $k_i$ for $i=1,\ldots , r$. For each code $\mathcal C_i$ we define the access structure related to $\mathcal C_i$ over the set of participants $\mathcal P_i=\{ P_1^i,P_2^i,\ldots , P_{n_i}^i\}$ as the set
\begin{equation}
 \Gamma_{\mathcal C_i}=\Gamma_i
\doteqdot
\left\lbrace \{ P_{j_1}^i,\ldots P_{j_s}^i\} \mid \exists \mathbf c\neq 0,\, \mathbf c\in\mathcal C_i\hbox{ such that } \mathrm{supp}(\mathbf c)=\{j_1,\ldots ,j_s\}\right\rbrace .
\end{equation}
That is, the  family of qualified subsets is in one to one correspondence with the supports associated to the codewords of $\mathcal C_i$ and indeed $\Gamma_i^m$ is determined by the minimal support codewords of $\mathcal C_i$.
\begin{definition} Let $\mathcal P_i=\{ P_1^i,P_2^i,\ldots , P_{n_i}^i\}$ be the set of participants related to the code $\mathcal C_i$ for $i=1,\ldots , r$ and consider all of them disjoint. Let $\Gamma_0$ be an access structure over $\{\mathcal P_i\}_{i=1}^r$, we define the access struture $\Gamma_0[\mathcal C_1,\ldots,\mathcal C_s]$ over the set of participants $\mathcal P=\bigsqcup_{i=1}^s \mathcal P_i$ as the composite structure (see Equation~\ref{eq:iterated} for a definition of composite structure)
\begin{equation}
 \Gamma_0[\mathcal C_1,\ldots,\mathcal C_s]=\Gamma_0[\Gamma_{\mathcal C_1},\ldots,\Gamma_{\mathcal C_s}].
\end{equation}
\end{definition}

\begin{remark}
Note that the monotone access structures $\Gamma_{\mathcal C_i}$ are $\mathbb F_q$-matroid representable structures but not in the usual sense (see for example~\cite{brickell:1990}) since they do not have a distinguished participant or a dealer. In our case all the supports in $\mathcal C$ are considered, not only those that include the first coordinate.  Thus,  by definition, it is not obvious that they can be realized by a  vector space construction. We will show in Corollary~\ref{coro} that this last statement is true.
\end{remark}

\begin{remark}\label{remark1}
If each $\mathcal C_i$ is taken to be the Reed-Solomon code $RS(n_i,k_i)$ of parameters $[n_i,k_i]$ and $\Gamma_0$ is a threshold secret sharing scheme then we recover the class of iterated threshold access structures defined in \cite{martinez:2004}. 
\end{remark}
\begin{proposition}
$$\left( \Gamma_0[\mathcal C_1,\ldots,\mathcal C_s]\right)^\star=\Gamma_0^\star[\mathcal C_1^\perp,\ldots,\mathcal C_s^\perp]$$
\end{proposition}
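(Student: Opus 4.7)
My plan has two steps mirroring the shape of the statement: a general duality for arbitrary composite structures, followed by the coding-theoretic identity $\Gamma_{\mathcal C}^\star=\Gamma_{\mathcal C^\perp}$ applied block by block.

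For the general step, I would prove $(\Gamma_0[\Gamma_1,\ldots,\Gamma_r])^\star=\Gamma_0^\star[\Gamma_1^\star,\ldots,\Gamma_r^\star]$ by direct unpacking. A set $A\subseteq\mathcal P$ lies in the left-hand side iff $\mathcal P\setminus A$ fails to be authorized in the composite, iff for every $B\in\Gamma_0$ some $\mathcal P_i\in B$ satisfies $(\mathcal P\setminus A)\cap\mathcal P_i\notin\Gamma_i$. Using $(\mathcal P\setminus A)\cap\mathcal P_i=\mathcal P_i\setminus(A\cap\mathcal P_i)$ and the definition of $\Gamma_i^\star$, this rewrites as: for every $B\in\Gamma_0$ some $\mathcal P_i\in B$ has $A\cap\mathcal P_i\in\Gamma_i^\star$. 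Setting $B^{\circ}=\{\mathcal P_i\mid A\cap\mathcal P_i\in\Gamma_i^\star\}$, the condition is exactly that $B^{\circ}$ meets every element of $\Gamma_0$, which by monotonicity of $\Gamma_0$ is equivalent to $\{\mathcal P_1,\ldots,\mathcal P_r\}\setminus B^{\circ}\notin\Gamma_0$, i.e.\ $B^{\circ}\in\Gamma_0^\star$. That is precisely the defining condition for $A\in\Gamma_0^\star[\Gamma_1^\star,\ldots,\Gamma_r^\star]$.

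For the block-level identity $\Gamma_{\mathcal C}^\star=\Gamma_{\mathcal C^\perp}$, I would invoke matroid duality. The matroid $M_{\mathcal C}$ whose circuits are the minimal supports of $\mathcal C$ satisfies $M_{\mathcal C}^*=M_{\mathcal C^\perp}$, so the cocircuits of $M_{\mathcal C}$ are exactly the minimal supports of $\mathcal C^\perp$, i.e.\ the minimal elements of $\Gamma_{\mathcal C^\perp}$; matching the set-theoretic dual of $\Gamma_{\mathcal C}$ with this matroidal description yields the identity. Chaining the two steps then produces $(\Gamma_0[\mathcal C_1,\ldots,\mathcal C_s])^\star=\Gamma_0^\star[\Gamma_{\mathcal C_1}^\star,\ldots,\Gamma_{\mathcal C_s}^\star]=\Gamma_0^\star[\Gamma_{\mathcal C_1^\perp},\ldots,\Gamma_{\mathcal C_s^\perp}]=\Gamma_0^\star[\mathcal C_1^\perp,\ldots,\mathcal C_s^\perp]$. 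I expect the harder step to be the block identity: on the surface ``$\mathcal P\setminus A$ does not contain a minimal support of $\mathcal C$'' and ``$A$ contains a minimal support of $\mathcal C^\perp$'' are not visibly the same, and carefully bridging these two descriptions via matroid duality is where the coding-theoretic content of the proposition truly resides.
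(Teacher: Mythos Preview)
Your two-step strategy is exactly the paper's: first the general duality $(\Gamma_0[\Gamma_1,\ldots,\Gamma_s])^\star=\Gamma_0^\star[\Gamma_1^\star,\ldots,\Gamma_s^\star]$, then the block identity $\Gamma_{\mathcal C_i}^\star=\Gamma_{\mathcal C_i^\perp}$ via matroid (code) duality. The only difference is cosmetic: the paper quotes the first step from \cite[Proposition~2]{martinez:2004} whereas you unpack it directly, and for the second step the paper simply asserts that $\Gamma_{\mathcal C_i}^\star$ is realized by the dual code while you sketch the matroid argument; your instinct that this block-level identification is the place requiring the most care is well placed.
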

\begin{proof}
We know by \cite[Proposition 2]{martinez:2004} that $\left( \Gamma_0[\Gamma_{\mathcal C_1},\ldots,\Gamma_{\mathcal C_s}]\right)^\star=\Gamma_0^\star[\Gamma_{\mathcal C_1}^\star,\ldots,\Gamma_{\mathcal C_s}^\star]$. But the structure $\Gamma_{\mathcal C_i}^\star$  is representable by a code ($\mathbb F_q$-representable matroid) which is given by its dual code $\mathcal C_i^\perp$ and the result follows. 
\end{proof}
Recall that we will denote by $\Gamma^m$ the minimal qualified subsets in the access structure $\Gamma$ and by $\mathcal C^m$ the subsets of participants in $\Gamma_{\mathcal C}$  related to  minimal codewords of $\mathcal C$.
\begin{proposition}
$$\left( \Gamma_0[\mathcal C_1,\ldots,\mathcal C_s]\right)^m=\Gamma_0^m[\mathcal C_1^m,\ldots,\mathcal C_s^m]$$
\end{proposition}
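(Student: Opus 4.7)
The plan is to prove both set inclusions by analysing the ``certificate'' $B \in \Gamma_0$ that witnesses membership in the composite structure. Recall that $A$ belongs to $\Gamma_0[\mathcal{C}_1,\ldots,\mathcal{C}_s]$ exactly when some $B \in \Gamma_0$ satisfies $A \cap \mathcal{P}_i \in \Gamma_{\mathcal{C}_i}$ for every $\mathcal{P}_i \in B$. I will interpret the right-hand side as the composite built from $\Gamma_0^m$ and the families $\mathcal{C}_i^m$, so an element is a set $A$ for which some $B \in \Gamma_0^m$ satisfies $A \cap \mathcal{P}_i \in \mathcal{C}_i^m$ for $\mathcal{P}_i \in B$ and $A \cap \mathcal{P}_i = \emptyset$ for $\mathcal{P}_i \notin B$.

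For the inclusion $(\subseteq)$, I would take $A \in (\Gamma_0[\mathcal{C}_1,\ldots,\mathcal{C}_s])^m$ and produce the required triple in three tidy steps. First, pick any witness $B \in \Gamma_0$ and shrink it to some $B' \in \Gamma_0^m$ with $B' \subseteq B$; since $A \cap \mathcal{P}_i \in \Gamma_{\mathcal{C}_i}$ still holds for $\mathcal{P}_i \in B'$, the set $B'$ continues to witness $A$. Second, form $A^{\star} = \bigsqcup_{\mathcal{P}_i \in B'} (A \cap \mathcal{P}_i) \subseteq A$; this set lies in the composite structure via $B'$, so minimality of $A$ forces $A = A^{\star}$, giving $A \cap \mathcal{P}_i = \emptyset$ whenever $\mathcal{P}_i \notin B'$. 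Third, for each $\mathcal{P}_i \in B'$, if $A \cap \mathcal{P}_i$ were not in $\mathcal{C}_i^m$ there would exist a proper subset $S \subsetneq A \cap \mathcal{P}_i$ with $S \in \Gamma_{\mathcal{C}_i}$; replacing $A \cap \mathcal{P}_i$ by $S$ inside $A$ gives a strictly smaller set still in the composite, again contradicting minimality.

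For $(\supseteq)$, I would take $A$ witnessed by some $B \in \Gamma_0^m$ with $A \cap \mathcal{P}_i \in \mathcal{C}_i^m$ on $B$ and empty elsewhere, and show nothing smaller than $A$ lies in the composite structure. Suppose $A' \subsetneq A$ does, with some witness $B'' \in \Gamma_0$. Define the ``effective'' index set $B^{\star} = \{\mathcal{P}_i \mid A' \cap \mathcal{P}_i \in \Gamma_{\mathcal{C}_i}\}$. Since $B'' \subseteq B^{\star}$ and $\Gamma_0$ is monotone, $B^{\star} \in \Gamma_0$; since $A \cap \mathcal{P}_i = \emptyset$ off $B$ forces $A' \cap \mathcal{P}_i = \emptyset$ off $B$, and the empty set is not a support of a non-zero codeword, $B^{\star} \subseteq B$. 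Minimality of $B$ in $\Gamma_0$ yields $B^{\star} = B$, so $A' \cap \mathcal{P}_i \in \Gamma_{\mathcal{C}_i}$ for all $\mathcal{P}_i \in B$. Because $A' \subsetneq A$, some inclusion $A' \cap \mathcal{P}_j \subsetneq A \cap \mathcal{P}_j$ is strict, contradicting the minimality of the codeword support $A \cap \mathcal{P}_j \in \mathcal{C}_j^m$.

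The only delicate point is the second inclusion: one must prevent an attacker from ``shifting'' the witness, i.e.\ using a different $B'' \in \Gamma_0$ for $A'$ that is not comparable with $B$. The trick of passing to the maximal effective $B^{\star}$ and then invoking the monotonicity of $\Gamma_0$ together with the emptiness of $A \cap \mathcal{P}_i$ off $B$ is what closes this loophole; everything else is a routine minimality argument.
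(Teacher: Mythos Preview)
Your argument is correct. The paper's own proof is a one-line appeal to \cite[Proposition~1]{martinez:2004}, which already states that $(\Gamma_0[\Gamma_1,\ldots,\Gamma_s])^m=\Gamma_0^m[\Gamma_1^m,\ldots,\Gamma_s^m]$ for arbitrary access structures; what you have written is precisely a self-contained proof of that general fact specialised to the code setting, so the underlying reasoning is the same. Your explicit interpretation of the right-hand side (requiring $A\cap\mathcal P_i=\emptyset$ for $\mathcal P_i\notin B$) is exactly the convention needed to make the identity literally true, and your handling of the ``shifted witness'' issue via the maximal effective set $B^\star$ is the clean way to close the second inclusion.
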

\begin{proof}
 It follows straightforward from the definitions and \cite[Proposition 1]{martinez:2004}.
\end{proof}
\section{Main Theorem}
\label{MT}
\begin{lemma}\label{lem}
 Let $\mathcal C$ be a $\mathbb F_q$-linear code of parameters $[n,k]$. There exists a $\mathbb F_{q^s}$-linear code $\mathcal C^\prime$ of parameters $[n,k]$ fulfilling the following properties:
\begin{enumerate}
 \item $\Gamma_{\mathcal C}=\Gamma_{\mathcal C^\prime}$.
\item For each minimal support $S\in\{1,\ldots ,n\}$ of $\mathcal C^\prime$ there exists a $\mathbf m\in (\mathcal C^\prime)^m$ with  $\sum_{i=1}^n m_i\neq 0$ and $\mathrm{supp}(\mathbf m)=S$. 
\end{enumerate}
\end{lemma}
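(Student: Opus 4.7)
The plan is to take $\mathcal C^\prime$ to be a monomial (diagonal) twist of the scalar extension $\mathcal C\otimes\mathbb F_{q^s}$. Concretely, choose $s=n$ and let $\alpha\in\mathbb F_{q^s}$ be a primitive element, so that $[\mathbb F_q(\alpha):\mathbb F_q]=n$. Put $D=\mathrm{diag}(1,\alpha,\alpha^2,\ldots,\alpha^{n-1})$ and define
\[
\mathcal C^\prime = D\bigl(\mathcal C\otimes\mathbb F_{q^s}\bigr).
\]
Since $\mathcal C\otimes\mathbb F_{q^s}$ has $\mathbb F_{q^s}$-dimension $k$ and $D\in\mathrm{GL}_n(\mathbb F_{q^s})$, $\mathcal C^\prime$ is an $\mathbb F_{q^s}$-linear code of parameters $[n,k]$.

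To establish item (1), I would show that scalar extension preserves the minimal supports. Fix an $\mathbb F_q$-basis $\omega_1,\ldots,\omega_s$ of $\mathbb F_{q^s}$. Any $\mathbf c^\prime\in\mathcal C\otimes\mathbb F_{q^s}$ decomposes uniquely as $\mathbf c^\prime=\sum_{\ell=1}^s\omega_\ell\mathbf c^*_\ell$ with $\mathbf c^*_\ell\in\mathcal C$, whence $\mathrm{supp}(\mathbf c^\prime)=\bigcup_\ell\mathrm{supp}(\mathbf c^*_\ell)$. From this identity, any minimal support of $\mathcal C\otimes\mathbb F_{q^s}$ is already realised by some $\mathbf c^*_\ell\in\mathcal C^m$, and conversely no proper subset of a minimal support of $\mathcal C$ can be the support of anything nonzero in the extension. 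Because the monomial $D$ preserves supports coordinatewise, the minimal supports of $\mathcal C$, $\mathcal C\otimes\mathbb F_{q^s}$ and $\mathcal C^\prime$ all coincide, so $\Gamma_{\mathcal C}=\Gamma_{\mathcal C^\prime}$ as monotone structures.

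For item (2), I would observe that for each minimal support $S$ the codewords of $\mathcal C^\prime$ supported exactly on $S$ form a single line: if $\mathbf m_1,\mathbf m_2$ both have support $S$, then for an appropriate $\lambda\in\mathbb F_{q^s}^*$ the difference $\mathbf m_1-\lambda\mathbf m_2$ has support strictly inside $S$, hence is zero by minimality. It therefore suffices to exhibit one representative with nonzero coordinate sum. Picking any $\mathbf m_S\in\mathcal C^m$ with $\mathrm{supp}(\mathbf m_S)=S$, the vector $D\mathbf m_S\in(\mathcal C^\prime)^m$ has coordinate sum
\[
\sum_{i=1}^n\alpha^{i-1}(m_S)_i,
\]
a polynomial in $\alpha$ of degree at most $n-1$ whose coefficients lie in $\mathbb F_q$ and are not all zero (since $\mathbf m_S\neq\mathbf 0$). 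Because the minimal polynomial of $\alpha$ over $\mathbb F_q$ has degree $n$, this evaluation cannot vanish, yielding the required nonzero sum.

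The key obstacle, in my view, is item (1): the naive scalar extension $\mathcal C\otimes\mathbb F_{q^s}$ typically enlarges the set of non-minimal supports (for instance the even-weight $[3,2]_2$ code acquires the full support $\{1,2,3\}$ once extended to $\mathbb F_4$), so the equality $\Gamma_{\mathcal C}=\Gamma_{\mathcal C^\prime}$ must be read through the monotone closure, or equivalently at the level of $\Gamma^m$. Once the matroid-theoretic invariance of the minimal supports under field extension is established via the $\mathbb F_q$-basis decomposition above, the remainder of the proof reduces to a clean degree argument in $\mathbb F_q[\alpha]$ that sidesteps any Schwartz--Zippel style probabilistic bound.
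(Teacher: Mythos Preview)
Your argument is correct, and in fact it is conceptually the same move the paper makes---both constructions produce $\mathcal C^\prime$ as a diagonal (monomial) twist of a scalar extension of $\mathcal C$---but you carry it out in a much tighter way. The paper proceeds iteratively: it lists the minimal supports $A_1,\ldots,A_\alpha$, and for each one that currently fails condition (2) it rescales a single column of the parity check matrix by a constant chosen to avoid a finite list of bad values (possibly enlarging the field at each step), checking that previously fixed supports are not spoiled. You instead make one global choice, $D=\mathrm{diag}(1,\alpha,\ldots,\alpha^{n-1})$ with $\alpha$ a primitive element of $\mathbb F_{q^n}$, and replace the step-by-step avoidance argument by the clean observation that $\sum_i\alpha^{i-1}(m_S)_i$ is a nonzero $\mathbb F_q$-polynomial of degree $<n$ evaluated at an element of degree $n$. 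This buys you an explicit, uniform bound $s=n$ on the extension degree and a two-line verification of (2), whereas the paper's inductive procedure gives no explicit bound on $s$ but could in principle terminate over a smaller field when few minimal supports are problematic.

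Your caveat about item (1) is exactly the right one, and it applies equally to the paper's proof: the paper asserts that rescaling columns (over an extension field) ``defines the same linear dependence relations'' and hence ``realizes the same access structure'', which is precisely your matroid-invariance statement that the \emph{minimal} supports are preserved under scalar extension and monomial equivalence. Neither argument preserves the literal set of all supports, so in both cases $\Gamma_{\mathcal C}=\Gamma_{\mathcal C^\prime}$ is to be read through $\Gamma^m$; your $\mathbb F_q$-basis decomposition makes this point explicit where the paper leaves it implicit.
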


\begin{proof}
Let  $\Gamma_{\mathcal C}^m= \{A_1,A_2,\cdots,A_\alpha \}$ be the set of minimal qualified subsets of $\Gamma_{\mathcal C}$ w.r.t. some ordering. Let  $H$ be a  parity check matrix of $\mathcal C$ where $\mathbf h_j$ denotes the $j$-th column with $j=1, \ldots, n$.

By definition $A_1$ is related to at least a codeword support of $\mathcal C$. Assume that all linear combination based on $A_1$ over $\mathbb F_q$ satisfy the following expression:
$$\begin{array}{ccc}
\sum_{j=1}^n \lambda_j \mathbf h_j = 0  & \hbox{ with } & \sum_{j=1}^n \lambda _j = 0
\end{array}.$$
Then we proceed as follows:
\begin{enumerate}
\item Choose an arbitrary linear combination of the above set, say $\lambda_1^{1}, \ldots, \lambda_n^{1}\in \mathbb F_q$, where
$$\begin{array}{cccc}
\lambda_j^{1} \neq 0 \hbox{ if } P_j \in A_1, & \sum_{j=1}^n \lambda_j^{1} \mathbf h_j = 0 & \hbox{ and } & \sum_{j=1}^n \lambda_j^{1} = 0.
\end{array}$$
\item Take a column $\mathbf h_j$ such that $\lambda_j^1 \neq 0$ and define the vector
$$\overline{\mathbf h_j} = \frac{1}{\gamma_1}\mathbf h_j$$
in such a way that $\lambda_j^{1} \gamma_1$ is neither zero nor equal to $-\sum_{i=1}^n \lambda_i^{1} + \lambda_j^{1}$. Note that in the binary case, $q=2$, we need to enlarge the field to some $\mathbb F_{2^{s_1}}$.
\item Define the matrix $H^{1}$ obtained from $H$ by replacing the vector $\mathbf h_j$ by $\overline{\mathbf h_j}$. Observe that $H^{1}$ defines the same linear dependence relations as $H$, since linear dependence behaves well when extending scalars to a field extension, and therefore both matrices realize the same access structure.
\end{enumerate}

At the end of this process we have found a linear combination based on $A_1$ over $\mathbb F_{q^{s_1}}$ such that
$$\begin{array}{ccc}
\sum_{j=1}^{n}\lambda_j^{1} \mathbf h_j^{1} = 0 & \hbox{ and } & \sum_{j=1}^n \lambda_j^{1} \neq 0,
\end{array}$$
where $\mathbf h_j^{1}$ denotes the $j$-th column of the matrix $H^{1}$ for $j=1, \ldots , n$.

Once we have modified the original code and probably the field of definition for the set $A_1$ we check $A_2$. If all linear combination based on $A_2$ over $\mathbb F_{q^{s_1}}$ satisfy the following expression:
$$\begin{array}{ccc}
\sum_{j=1} \lambda_j \mathbf h_j^1 = 0 & \hbox{ with } & \sum_{j=1}^n \lambda_j = 0
\end{array}.$$
Then we proceed as follows (otherwise we skip this step):
\begin{enumerate}
\item Choose an arbitrary linear combination of the above set, say $\lambda_1^2, \ldots, \lambda_n^2$ where
$$\begin{array}{cccc}
\lambda_j^2 \neq 0 \hbox{ if } P_j \in A_2, & \sum_{j=1}^n\lambda_j^2 \mathbf h_j^1 = 0 & \hbox{ and } & \sum_{j=1}^n\lambda_j^2 = 0. \end{array}$$
\item Take a column $\mathbf h_j^1$ such that $\lambda_j^2\neq 0$ and define the vector
$$\overline{\mathbf h_j^1} = \frac{1}{\gamma_2}\mathbf h_j^1$$
in such a way that:
\begin{enumerate}
\item If $P_j \notin A_1$ then $\lambda_j^2 \gamma_2$ is neither zero nor equal to $-\sum_{i=1}^n \lambda_i^2 + \lambda_j$.
\item Otherwise $\lambda_j^2\gamma_2$ has to be different from zero and from the values 
$$\begin{array}{ccc}
-\sum_{i=1}^n\lambda_i^1 + \lambda_j^1 & \hbox{ and } & -\sum_{i=1}^n \lambda_i^2 + \lambda_j^2
\end{array}.$$
\end{enumerate}
\item Define the matrix $H^2$ obtained from $H^1$ by replacing the column $\mathbf h_j^1$ by $\overline{\mathbf h_j^1}$. Again $H^2$ realize the same access structure as $H^1$ and $H$.
\end{enumerate}
Similarly to the previous process, we obtain a linear combination based on $A_2$ over $\mathbb F_{q^{s_2}}$ such that
$$\begin{array}{cccc}
\sum_{j=1}^n \lambda_j^2 \mathbf h_j^2 = 0 & \hbox{ and } & \sum_{j=1}^n \lambda_j^2 \neq 0
\end{array}.$$

Let us now proceed by induction. Suppose that we have a parity check matrix $H^l$ whose code (possibly defined in an extension of the scalars) realizes the structure $\Gamma_{\mathcal C}$ and for each $A_i$ with $i\leq l$ there exists a linear combination of the corresponding rows to the supports of $A_i$ with the sum of the coefficients different from zero. Suppose that for each linear combination based on $A_{l+1}$ over $\mathbb F_{q^{s_l}}$ we have  
$$\begin{array}{ccc}
\sum_{j=1}^{n} \lambda_{j} \mathbf h_j^l=\mathbf 0 & \hbox{ with } &\sum_{j=1}^{n} \lambda_{j}=\mathbf 0.\end{array}$$

Then we choose an arbitrary linear combination of the above set, say $\lambda_1^{l+1}, \ldots, \lambda_n^{l+1}$, we take a column $\mathbf h_j^l$ of $H^l$ corresponding to the support of $A_{l+1}$ such that $\lambda^{l+1}_{j}\neq 0$ and we define
$$\overline{\mathbf h^{l}_{j}}= \frac{1}{\gamma_{l+1}} \mathbf h^l_{j}$$ where $\gamma_{l+1}$ satisfy the following properties: 
\begin{itemize}
\item If $P_j \notin \{A_1, \ldots, A_l\}$ then $\lambda^{l+1}_{j} \cdot \gamma^{l+1}\notin \left\{0,-\sum_{i=1}^{n}\lambda^{l+1}_{i}+\lambda_j^{l+1} \right\}$.
\item If $P_j$ is only in $A_t$ and $A_{l+1}$ with $t=1, \ldots, l$ then 
$$\lambda^{l+1}_{j} \cdot \gamma^{l+1}\notin \left\{0,-\sum_{i=1}^{n}\lambda^{l+1}_{i}+\lambda_j^{l+1} , -\sum_{i=1}^{n}\lambda^{t}_{i}+\lambda_j^{t} 
\right\}.$$
\item $\ldots$
\item If $P_j$ is in $A_{i_1}, \ldots, A_{i_s}$ and $A_{l+1}$ then 
$$\lambda^{l+1}_{j} \cdot \gamma^{l+1}\notin \left\{0,-\sum_{i=1}^{n}\lambda^{l+1}_{i}+\lambda_j^{l+1} , -\sum_{i=1}^{n}\lambda^{i_1}_{i}+\lambda_j^{i_1}, \ldots,  -\sum_{i=1}^{n}\lambda^{i_s}_{i}+\lambda_j^{i_s}
\right\}.$$
\item $\ldots$
\item If $P_j$ is in $A_1, \ldots, A_l, A_{l+1}$ then 
$$\lambda^{l+1}_{j} \cdot \gamma^{l+1}\notin \left\{0,-\sum_{i=1}^{n}\lambda^{l+1}_{i}+\lambda_j^{l+1} , -\sum_{i=1}^{n}\lambda^{1}_{i}+\lambda_j^{1}, \ldots,  -\sum_{i=1}^{n}\lambda^{l}_{i}+\lambda_j^{l}
\right\}.$$
\end{itemize}

The steps above could require to enlarge the field in order to get enough coefficients.
We define $H^{l+1}$ to be the matrix obtained by replacing $\overline{\mathbf h_{j}^{l}}$ by $\mathbf h^{l}_{j}$ in $H^{l}$. $H^{l+1}$ defines the same linear dependence relations as $H^{l},\ldots, H^1$ and $H$. Thus the induction step is proved and we can conclude the proof, i.e. in at most $\alpha$ steps we get a parity check matrix $H^{\alpha}$ defining a code with the required properties.
\end{proof}

\begin{theorem} \label{tm}
If $\Gamma_0$  admits a vector space construction then also $\Gamma_0[\mathcal C_1,\ldots,\mathcal C_s]$ admits a vector space construction.
\end{theorem}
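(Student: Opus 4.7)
The plan is to combine Lemma~\ref{lem} with a block gluing of the vector space construction for $\Gamma_0$ and individual vector space constructions for each $\Gamma_{\mathcal C_i}$. First I would apply Lemma~\ref{lem} to each code $\mathcal C_i$, obtaining a code $\mathcal C_i^\prime$ over an extension of $\mathbb F_q$ with $\Gamma_{\mathcal C_i^\prime}=\Gamma_{\mathcal C_i}$ and such that every minimal support of $\mathcal C_i^\prime$ is attained by a codeword with nonzero coordinate sum. I would then pass to a common extension $\mathbb F_Q$ containing $\mathbb F_q$ and each of these fields; since a vector space construction survives extension of scalars, $\Gamma_0$ still admits one over $\mathbb F_Q$, with vectors $v_i=\Phi_0(\mathcal P_i)\in \mathbb F_Q^{d_0}$ and target $\mathbf e_1$.

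The most delicate step, and the one I expect to be the main obstacle, is to produce a vector space realisation of each $\Gamma_{\mathcal C_i^\prime}$ on its own. The natural choice is to assign to $P_j^i$ the vector $(1,\mathbf h_j^i)\in \mathbb F_Q^{1+(n_i-k_i)}$, where $\mathbf h_j^i$ is the $j$-th column of a parity check matrix of $\mathcal C_i^\prime$, and take $(1,\mathbf 0)$ as target. The target lies in the span of $\{(1,\mathbf h_j^i):j\in T\}$ exactly when there exist $\mu_j$ supported on $T$ with $\sum_j \mu_j=1$ and $\sum_j \mu_j\mathbf h_j^i=\mathbf 0$; the second condition says $(\mu_j)$ is a codeword and the first demands it be rescalable to sum one. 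This is precisely where Lemma~\ref{lem} enters: for every qualified $T$ it supplies a codeword on a minimal support contained in $T$ with nonzero coordinate sum, which can then be scaled to sum one. Without enlarging the base field this is generally impossible.

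Finally I would glue these realisations via
\[
\Psi(P_j^i)=(v_i,\;\mathbf 0,\ldots,\mathbf 0,\;\mathbf h_j^i,\;\mathbf 0,\ldots,\mathbf 0)\in\mathbb F_Q^{d_0+\sum_{l=1}^s(n_l-k_l)},
\]
placing $\mathbf h_j^i$ in the $i$-th of the $s$ trailing blocks and keeping the target $\mathbf e_1$ in the leading block. The forward direction follows directly: given $B\in\Gamma_0$ with $\sum_{\mathcal P_i\in B}\alpha_i v_i=\mathbf e_1$ and, for each $\mathcal P_i\in B$, coefficients $\mu_j^i$ supported on $A\cap \mathcal P_i$ with $\sum_j\mu_j^i=1$ and $\sum_j\mu_j^i\mathbf h_j^i=\mathbf 0$, the combination $\sum_{i,j}\alpha_i\mu_j^i\Psi(P_j^i)$ equals the target. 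Conversely, any relation $\sum\beta_j^i\Psi(P_j^i)=\mathbf e_1$ with $\beta_j^i=0$ for $P_j^i\notin A$ makes $(\beta_j^i)_j$ a codeword of $\mathcal C_i^\prime$ supported in $A\cap \mathcal P_i$ and forces $\sum_i\gamma_i v_i=\mathbf e_1$ with $\gamma_i=\sum_j\beta_j^i$; then $B=\{\mathcal P_i:\gamma_i\neq 0\}\in\Gamma_0$ and the nonzero $\gamma_i$'s make the associated $(\beta_j^i)$ nonzero codewords, witnessing $A\cap \mathcal P_i\in\Gamma_{\mathcal C_i^\prime}$ for every $\mathcal P_i\in B$, so $A\in\Gamma_0[\mathcal C_1,\ldots,\mathcal C_s]$.
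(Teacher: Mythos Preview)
Your proposal is correct and follows essentially the same approach as the paper: apply Lemma~\ref{lem} to each $\mathcal C_i$, then define $\Psi(P_j^i)=(\Phi_0(\mathcal P_i),\mathbf 0,\ldots,\mathbf h_j^i,\ldots,\mathbf 0)$ with target $\mathbf e_1$ in the leading block, and verify both directions exactly as you do. Your write-up is in fact slightly more careful than the paper's in two respects: you make explicit the passage to a common extension field $\mathbb F_Q$, and in the converse direction you spell out that $\gamma_i=\sum_j\beta_j^i\neq 0$ forces the codeword $(\beta_j^i)_j$ to be nonzero, which is the point the paper leaves implicit.
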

\begin{proof}
Consider the map $\Phi_0:\{\mathcal P_i\}_{i=1}^r\rightarrow \mathbb F_q^d$ that endows $\Gamma_0$ with a vector space construction. For each linear code $\mathcal C_i$ we consider the code $\mathcal C_i^\prime$ that has as parity check matrix the matrix $H_i$ constructed in the proof of Lemma \ref{lem}, probably defined in some field extension of $\mathbb F_q$. We denote by $\mathbf h_j^i$ the $j$-th column of $H_i$. Now we consider the map $\begin{array}{cccc}
 \Phi: & \mathcal P & \longrightarrow & \mathbb F_{q^s}^{d+\sum_{i=1}^s n_i}
\end{array}$
defined by 
$$\Phi(P_j^i) = (\Phi_0(\mathcal P_i),\mathbf 0_{n_1},\ldots ,\mathbf 0_{n_{j-1}}, \underbrace{(\mathbf h_j^i)^t}_{j+1\hbox{-th position}},  \mathbf 0_{n_{j+1}},\ldots ,\mathbf 0_{n_{s}}),$$
where $\mathbf 0_l$ denotes the zero vector of length $l$. We shall prove that $\Phi$ endows $\Gamma=\Gamma_0[\mathcal C_1^\prime,\ldots,\mathcal C_s^\prime]$ with a vector space construction, and therefore also  $\Gamma_0[\mathcal C_1,\ldots,\mathcal C_s]$ has a vector space construction since they define the same access structure by Lemma~\ref{lem}. 
 Let $A\in\Gamma$ be a qualified set and $B=\{\mathcal P_i\mid \mathcal P_i\cap A\in \Gamma_i\}\in\Gamma_0$. Let $A_i=\{P_{j_1}^i,\ldots, P_{j_{l_i}}^i \}\neq \emptyset$ be the set $A\cap \mathcal P_i$ and suppose that it is a minimal qualified set (otherwise it always contains one).  Thus the vectors $\{\mathbf h_{j_1}^i,\ldots, \mathbf h_{j_{l_i}}^i \}$ are linearly dependent and all  subsets of them of cardinality $l_i-1$ are linearly independent. By Lemma \ref{lem} we have that there exist a codeword in $\mathcal C_i^\prime$ given by $(0,\ldots, 0, \lambda_{j_1}^i, 0,\ldots, 0, \lambda_{j_{l_i}}^i, 0,\ldots, 0)$ such that 
$\mathbf 0=\sum_{k=1}^{l_i} \lambda_{j_k}^i\mathbf h_{j_k}^i$ and $\sum_{k=1}^{l_i} \lambda_{j_k}^i\neq 0$.
Thus for each $\mathcal P_i\in B$ the following non-zero vector
$$\mathbf 0\neq  \sum_{k=1}^{l_i} \lambda_{j_k}^i \Phi(P_{j_k}^i)= \left( \sum_{k=1}^{l_i} \lambda_{j_k}^i  \Phi_0(\mathcal P_i),\mathbf 0,\ldots,\mathbf 0\right)$$
belongs to $\left\langle \Phi(A) \right\rangle$, and since $\Phi_0$ defines a vector space structure on $\Gamma_0$ then 
$$\mathbf e_1\in \left\langle \sum_{k=1}^{l_i} \lambda_{j_k}^i \Phi_0(\mathcal P_i) \right\rangle_{\mathcal P_i\in B}$$ 
and we have that 
$(\mathbf e_1, \mathbf 0,\ldots,\mathbf 0)\in \left\langle \Phi(A) \right\rangle$.

On the other hand, let now $A\subseteq \mathcal P$ be a participant set such that 
$\mathbf e_1\in \left\langle \Phi(A) \right\rangle$. Then $\mathbf e_1\in \left\langle \Phi_0(B) \right\rangle$ and for each $\mathcal P_i\in B$ if $A_i=A\cap \mathcal P_i$ then $\mathbf 0\in \left\langle \pi_{i}(\Phi(B) ) \right\rangle$ where $\pi_{i}$ is the restriction of $\Phi(B)$ to the interval $\left[\begin{array}{cc} d+1+\sum_{j=1}^{i-1} n_j, & d+\sum_{j=1}^{i} n_j\end{array}\right]$. Therefore there exists a codeword in $\mathcal C^\prime_i$ with support corresponding to the participants of the set $A_i=A\cap \mathcal P_i$ for each $\mathcal P_i\in B$. 
\end{proof}
\begin{corollary}\label{coro}
$\Gamma_{\mathcal C}$ admits a vector space construction.
\end{corollary}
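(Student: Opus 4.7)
The plan is to derive the corollary as an immediate specialization of Theorem \ref{tm}. First I would set $r=1$ and take $\mathcal C_1 = \mathcal C$. I then need a suitable $\Gamma_0$ that makes the composite structure collapse onto $\Gamma_{\mathcal C}$.

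The natural choice is to let $\Gamma_0$ be the access structure on the one-element set $\{\mathcal P_1\}$ whose only (minimal) qualified set is $\{\mathcal P_1\}$ itself. Then from the defining equation
\[
\Gamma_0[\Gamma_{\mathcal C_1}] = \bigcup_{B \in \Gamma_0} \{A \subseteq \mathcal P \mid A \cap \mathcal P_i \in \Gamma_{\mathcal C_i}\text{ for all }\mathcal P_i \in B\},
\]
the only $B$ to consider is $B = \{\mathcal P_1\}$, and the condition on $A$ reduces to $A \in \Gamma_{\mathcal C_1} = \Gamma_{\mathcal C}$. Hence $\Gamma_0[\mathcal C] = \Gamma_{\mathcal C}$.

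Next I would verify that this trivial $\Gamma_0$ admits a vector space construction, which is immediate: define $\Phi_0 \colon \{\mathcal P_1\} \to \mathbb F_q^1$ by $\Phi_0(\mathcal P_1) = 1$ and take the target vector to be $\mathbf v = 1$; clearly $\mathbf v$ lies in the span of $\{\Phi_0(\mathcal P_1)\}$ if and only if $\mathcal P_1$ is included, matching $\Gamma_0$. Applying Theorem \ref{tm} with this $\Gamma_0$ yields that $\Gamma_0[\mathcal C] = \Gamma_{\mathcal C}$ admits a vector space construction, which is exactly the claim of the corollary.

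There is essentially no obstacle here; the only mildly subtle point is the bookkeeping to confirm that the one-element access structure is indeed a legitimate input for Theorem \ref{tm} (in particular that its one vector space construction is really a vector space construction in the sense of the paper). Once that is granted, the corollary follows in a single line.
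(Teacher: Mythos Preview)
Your proposal is correct and matches the paper's own proof exactly: the paper simply observes that $\Gamma_{\mathcal C}=(1,1)[\mathcal C]$ (the $(1,1)$-threshold scheme being precisely your one-element access structure) and invokes Theorem~\ref{tm}. Your extra verification that this trivial $\Gamma_0$ admits a vector space construction is implicit in the paper's earlier remark that all threshold schemes do.
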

\begin{proof}
Note that $\Gamma_{\mathcal C}=(1,1)[\mathcal C]$ so we can apply the above theorem.
\end{proof}

\section*{Acknowledgments}
\noindent  The first two authors are partially supported by Spanish MCINN under project MTM2007-64704. First author research is also supported by a FPU grant AP2008-01598 by Spanish MEC. Second author is also supported by Spanish MCINN under project MTM2010-21580-C02-02.





\bibliographystyle{plain}
\bibliography{emilio}







\end{document}